\documentclass[hidelinks,11pt,english]{article} 
\usepackage{geometry}                
\geometry{letterpaper}                   
\usepackage{graphicx}
\usepackage{amssymb}
\usepackage{amsmath}
\usepackage{epstopdf}
\usepackage[dvipsnames]{xcolor}
\usepackage{natbib}
\usepackage{algorithm}
\usepackage{hyperref}
\newcommand{\bean}{\begin{eqnarray*}}
\newcommand{\eean}{\end{eqnarray*}}
\newcommand{\bea}{\begin{eqnarray}}
\newcommand{\eea}{\end{eqnarray}}
\newcommand{\beaa}{\begin{array}}
\newcommand{\eeaa}{\end{array}}
\usepackage{apxproof}
\usepackage{xargs}
\usepackage[colorinlistoftodos,prependcaption,textsize=tiny]{todonotes}
\newcommandx{\unsure}[2][1=]{\todo[fancyline,backgroundcolor=red!25,bordercolor=red,#1]{#2}}
\newcommandx{\change}[2][1=]{\todo[linecolor=blue,backgroundcolor=blue!25,bordercolor=blue,#1]{#2}}
\newcommandx{\info}[2][1=]{\todo[linecolor=OliveGreen,backgroundcolor=OliveGreen!25,bordercolor=OliveGreen,#1]{#2}}
\newcommandx{\improvement}[2][1=]{\todo[linecolor=Plum,backgroundcolor=Plum!25,bordercolor=Plum,#1]{#2}}
\newcommandx{\thiswillnotshow}[2][1=]{\todo[disable,#1]{#2}}

\newcommand{\cR}{\mathcal{R}}

\newtheoremrep{thm}{Theorem}  
\newtheoremrep{lemma}{Lemma}  
\newtheoremrep{conjecture}{Conjecture}
\newtheoremrep{prop}{Proposition}
\newtheoremrep{cor}{Corollary}
\newtheoremrep{example}{Example}
\begin{document}

\title{Bounds and Heuristics for Multi-Product Personalized Pricing}

\author{Guillermo Gallego\footnote{Industrial Engineering and Decision Analytics Hong Kong University of Science and Technology, Kowloon, Hong Kong, ggallego@ust.hk. Supported by RGC project 16211619. }
\and
Gerardo Berbeglia\footnote{Melbourne Business School, The University of Melbourne, Australia, g.berbeglia@mbs.edu.}
}
\maketitle

\begin{abstract}
We present tight bounds and heuristics for personalized, multi-product pricing problems. Under mild conditions we show that the best price in the direction of a positive vector results in profits that are guaranteed to be at least as large as a fraction of the profits from optimal personalized pricing.  For unconstrained problems, the fraction depends on the factor and on optimal price vectors for the different customer types. For constrained problems the factor depends on the factor and a ratio of the constraints. Using a factor vector with equal components results in uniform pricing and has exceedingly mild sufficient conditions for the bound to hold. A robust factor is presented that achieves the best possible performance guarantee.  As an application, our model yields a tight lower-bound on the performance of linear pricing relative to optimal personalized non-linear pricing, and suggests effective non-linear price heuristics relative to personalized solutions. Additionally,  our model provides guarantees for simple strategies such as bundle-size pricing and component-pricing with respect to optimal personalized mixed bundle pricing. Heuristics to cluster customer types are also developed with the goal of improving performance by allowing each cluster to price along its own factor. Numerical results are presented for a variety of demand models that illustrate the tradeoffs between using the economic factor and the robust factor for each cluster, as well as the tradeoffs between using a clustering heuristic with a worst case performance of two and a machine learning clustering algorithm. In our experiments economically motivated factors coupled with machine learning clustering heuristics performed best.

\end{abstract}

\section{Literature Review and Summary of Contributions}

Studies of pricing and demand estimation go back to 17th century \citet{davenant1699} with significant work done in the 19th century by \citet{cournot1838} and others economists of that era.  Optimal pricing continues to attract researchers and practitioners  to develop a deeper understanding and make price optimization more practical. Personalized pricing, a form of third degree price discrimination  has emerge as more data is available, computing power becomes cheaper, and more transactions are done electronically enabling firms to charge different prices to different customer types. One key question is how much better is personalized pricing over non-personalized pricing. Another is to develop practical heuristics that perform well in practice and have a tight worst-case performance guarantee relative to more sophisticated pricing policies.

For the single product case, \citet{bergemann2020uniform} present a worst case bound  of two for the ratio of personalized to uniform pricing,  when the profit function for each customer type  is concave and the demand functions take positive values over a common compact set. \citet{malueg2006bounding} under more mild assumptions obtain a bound equal to number of types. \citet{elmachtoub2020value} obtains tight and robust bounds that depends on summary statistics of the aggregate demand distribution. \citet{gallegotopaloglu} provide bounds and heuristics for specific families of demand distributions including linear, exponential and logistic demand functions. They also include robust procedures to cluster customers types. \citet{chen2019distribution} present results for single-product distribution-free pricing.

In this paper we provide bounds and heuristics for the multi-product pricing problem including bounds for optimal pricing under a single factor (with uniform pricing as a special case) and show how our results can be used to bound the performance of linear-pricing versus non-linear pricing for the single product case. Our work can be seen as a  generalization of results in \citet{berbeglia2020assortment} that provide performance guarantee of uniform pricing for a subset of multi-product demand pricing problems known as \emph{envy-free pricing}. Besides extending the performance guarantees of uniform pricing to a much broader class of single factor models, and to linear versus non-linear pricing, our performance guarantees are also stronger because they hold with respect to the optimal personalized pricing profit rather than the optimal non-personalized profit.



\section{Single Factor versus Personalized Pricing}

Consider a firm with $n$ products and $m$ customers types. Let $d_{ij}(p)$
denote the demand for product $i \in  N: = \{1,\ldots, n\}$ for customer type $j \in M : = \{1,\ldots,m\}$
at price vector $p: = (p_1,\ldots,p_n)$. We assume that all demand functions are non-negative.  The profit function\footnote{After a simple transformation $p \leftarrow p-c$ and $d(p) \leftarrow d(c+p)$ if there is a non-zero unit cost vector $c$.} for type $j$ customers is given by
$R_j(p) :=\sum_{i \in N}p_id_{ij}(p)$. Denote by $\cR^*_j := \max_{p \geq 0}R_j(p)$ the maximum profit for type $j$ customers and let  $\theta_j > 0, j \in M$ be distribution of the customer types, so $\sum_{j \in M} \theta_j = 1$. Then $\bar{\cR} := \sum_{j \in M} \theta_j \cR^*_j$ is the optimal profit from personalized pricing, also known as third-degree price discrimination.  Let  $R(p) := \sum_{j \in M} \theta_j R_j(p)$ be the profit over all types at price $p$ and let $\cR^* = \max_{p \geq 0}R(p)$. We will refer to $\cR^*$ as the optimal profit from non-personalized pricing. Solving for $\cR^*$ may be difficult even when solving for $\cR^*_j, j \in M$ is easy. This is  because the aggregate demand function may be significantly more complex than the underlying demands for the customer types.  As a result, heuristics are often used. Here we consider a class of single factor heuristics (or pricing policies) where pricing is done along a positive vector $f$, with $\cR^f := \max_{q > 0} R(q f)$. This is a single dimensional optimization problem that can be solved numerically. Notice that $f = e$, the vector of ones,  results in the uniform pricing policy with $\cR^e$ the optimal profit under uniform pricing. Clearly $\cR^f \leq \cR^*$ with equality holding if $f \in \arg\max_{p \geq 0} R(p)$. Several questions arise from this setting, including finding tight bounds of the form (1) $\cR^* \leq \beta \cR^f$ that  provide performance guarantees for simple heuristics (including uniform pricing) for non-personalized pricing, or (2) $\bar{\cR} \leq \beta \cR^*$ to assess the benefit of personalized over non-personalized pricing. While both of these questions have been partially answered for specific pricing models, no tight bounds of type (1) and (2) are known for general pricing models with more than one product. To answer these questions simultaneously we will provide a tight upper-bound of the form $\bar{\cR} \leq \beta \cR^f$  which implies the former bounds on account of $\cR^* \leq \bar{\cR} \leq \beta \cR^f \leq \beta \cR^*$.

{\bf Brief  preview our results}:  Let $\bar{p}_{ij}, i \in N$ be an optimal price vector for type $j \in M$. If there are positive scalers $h$ and a $k$ such that $hf_i \leq  \bar{p}_{ij} \leq h\exp(k)f_i$ for all $i \in N, j \in M$, then under mild conditions $\bar{\cR} \leq (1+ k) \cR^f$. For the special case $f = e$,  if $\bar{p}_{ij} \in [h, 7.38h]$ for some $h > 0$, then $\bar{\cR} \leq 3\cR^e$. A sufficient condition for $f = e$ is that the products are week substitute and have the connected substitute property for each $j \in M$, see \citet{BGH}. The bounds work verbatim if prices are constrained to the stated intervals.\\



 \noindent{\bf Assumption~0 (A0)}:  We assume that optimal prices are positive and finite. \\
 Assumption~0 (A0) holds for virtually all practical pricing problems as few firms price their goods at zero when optimizing profits and there is no demand at infinite prices.  In our analysis we will first obtain a bound for the unconstrained case assuming that we are able to solve the pricing problem for each market segment.  We  then consider the case where prices are constrained to compact sets of the form $p_i \in [f_iq_{\min} , f_i q_{\max}], i \in N, j \in M$ for {\em exogenously} given $0 < q_{\min} < q_{\max} < \infty$.

 Let $\bar{p}_{ij}, i \in N$ is a vector of optimal prices for $R_j(p), j \in M$. Define $\delta_{ij}(q) := 1$ if $q \leq \bar{p}_{ij}/f_i$ and $\delta_{ij}(q) := 0$ otherwise. Let $\bar{p}^j$ be the vector with components $\bar{p}_{ij},  i \in N$ and consider
 $$G(q): = \sum_{j \in M} \theta_j \sum_{i \in N} f_id_{ij}(\bar{p}^j)\delta_{ij}(q)~~~q \geq 0.$$
 Then $G(q)$ is the $f$-weighted demand at the personalized optimal solution $\bar{p}^j, j \in M$ filtering out $i,j$ combinations for which $\bar{p}_{ij} < qf_i$, so some combinations are dropped as $q$ increases. On the other hand consider
 $$H(q): = \sum_{j \in M} \theta_j \sum_{i \in N} f_id_{ij}(qf)$$
 be the $f$-weighted demand at price vector $qf$. We are now ready to state our main assumption:\\

 \noindent{\bf Assumption~1 (A1)}: $G(q) \leq H(q)$ for all $q \geq 0$. \\

 For any given $q$, the LHS of A1 accumulates the $f$-weighted demands at optimal personalized pricing for $i,j$ combinations with $\bar{p}_{ij} \geq qf_i$ whereas the RHS is the $f$-weighted demand over all products and market segments at price vector $qf$. In many cases the inequality holds even if we filter terms on the right hand side by $\delta_{ij}(q)$. Later we will provide  sufficient conditions for A1, but we state here the weaker A1 to highlight the generality of our results.

 To gain intuition of how this inequality will be used, define $G_j(q) := \sum_{i \in N} f_id_{ij}(\bar{p}^j)\delta_{ij}(q)$ for $q \geq 0$ and notice that $\cR^*_j = \int_0^{\infty}G_j(q)dq$. Multiplying by $\theta_j$ and adding over $j \in M$ we conclude that $\bar{\cR} = \int_0^{\infty}G(q)dq$.
 Notice that $\delta_{ij}(q) =1$ for all $q \in [0,q_{\min}]$ and $\delta_{ij}(q) = 0$ for all $q  > q_{\max}$,
 where  $q_{\min} := \min_{i \in N, j \in M}\bar{p}_{ij}/f_i$ and $q_{\max}: = \max_{i \in N, j \in M}\bar{p}_{ij}/f_i$. This implies that
 $$\bar{\cR} = q_{\min}G(q_{\min}) + \int_{q_{\min}}^{q_{\max}}G(q)dq.$$
 Finally, notice that $R(qf) = qf'd(qf) = qH(q)$.  We are now ready to state our main result.
 \begin{thm}
\label{thm:sf}
Suppose that A0 and A1 hold.  Then $\bar{\cR} \leq \beta \cR^f$ where $\beta := 1 + \ln(q_{\max}/q_{\min})$.
Moreover, the bound is tight.
\end{thm}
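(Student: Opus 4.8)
\textbf{Proof proposal.} The plan is to bound $\bar{\cR} = \int_0^\infty G(q)\,dq$ from above by combining the exact decomposition
$$\bar{\cR} = q_{\min}G(q_{\min}) + \int_{q_{\min}}^{q_{\max}} G(q)\,dq$$
with Assumption~A1, which lets me replace $G(q)$ by $H(q)$ on the range $[q_{\min},q_{\max}]$, and with the observation that $R(qf) = q\,H(q) \le \cR^f$ for every $q>0$, so $H(q) \le \cR^f/q$. First I would handle the leading term: since $\delta_{ij}(q_{\min})=1$ for all $i,j$ we have $G(q_{\min}) = H(q_{\min})$ actually no---more carefully, $G(q_{\min}) \le H(q_{\min}) \le \cR^f/q_{\min}$ by A1 and the displayed identity for $R$, so $q_{\min}G(q_{\min}) \le \cR^f$. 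Then for the integral term, $\int_{q_{\min}}^{q_{\max}} G(q)\,dq \le \int_{q_{\min}}^{q_{\max}} H(q)\,dq \le \int_{q_{\min}}^{q_{\max}} \frac{\cR^f}{q}\,dq = \cR^f \ln(q_{\max}/q_{\min})$. Adding the two pieces gives $\bar{\cR} \le \cR^f\bigl(1 + \ln(q_{\max}/q_{\min})\bigr) = \beta\cR^f$, which is the claimed inequality.

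\textbf{Where the work lies.} The analytic content is light once the identities $\bar{\cR} = \int_0^\infty G(q)\,dq$ and $R(qf) = q\,H(q)$ are in hand (both are stated in the excerpt), so the main things to verify carefully are (i) that $G(q_{\min}) \le \cR^f/q_{\min}$, which needs A1 at $q=q_{\min}$ together with $H(q_{\min}) = R(q_{\min}f)/q_{\min} \le \cR^f/q_{\min}$, and (ii) the interchange/monotonicity facts: that $\delta_{ij}$ is nonincreasing in $q$ so that $G$ is supported on $[0,q_{\max}]$, and that $G,H$ are nonnegative and integrable on the relevant ranges (guaranteed by A0, which forces the $\bar p_{ij}$ finite and positive so $0<q_{\min}\le q_{\max}<\infty$, and by nonnegativity of demands). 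I also want to double-check the boundary contributions at $q_{\min}$ are not double-counted: the decomposition isolates the flat part $[0,q_{\min}]$ where $G$ is constant equal to $G(q_{\min})$, contributing exactly $q_{\min}G(q_{\min})$, and the remaining integral runs over $(q_{\min},q_{\max}]$; these are disjoint, so the sum is legitimate.

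\textbf{Tightness.} For the second half of the statement I would exhibit a family of instances on which $\bar{\cR}/\cR^f \to \beta$. The natural construction is a single customer type ($m=1$, $\theta_1=1$) and a single product ($n=1$, $f=1$) with a demand function engineered so that the personalized-optimal revenue ``spreads'' its area uniformly in the $\log$-price scale: take $d(p) = 1/p$ for $p \in [q_{\min},q_{\max}]$ and $d(p)=1/q_{\min}$ for $p<q_{\min}$, $d(p)=0$ for $p>q_{\max}$ (a small smoothing or truncation keeps revenues finite and A1 trivially holds with equality since $G=H$ here). Then $R(p) = p\,d(p)$ is constant and equal to $1$ on the whole interval $[q_{\min},q_{\max}]$, so $\cR^f = \cR^* = 1$, whereas $\bar{\cR} = \int_0^{q_{\max}} G(q)\,dq = q_{\min}\cdot\frac{1}{q_{\min}} + \int_{q_{\min}}^{q_{\max}}\frac{1}{q}\,dq = 1 + \ln(q_{\max}/q_{\min}) = \beta$. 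So the bound is attained exactly (not merely in a limit). The only subtlety to check is that this $d$ satisfies A0 and A1 as stated; A0 is immediate ($\bar p = q_{\max}$, say, or any point of the flat revenue interval, all positive and finite once we pick a maximizer), and A1 holds with equality because there is a single type and $G_1(q) = d(\bar p)\,\delta_1(q)$ with the same $f$-weighting as $H$. I expect the main obstacle to be purely expository: stating the tight example cleanly enough that A0 and A1 are manifestly satisfied, rather than any genuine difficulty in the estimate.
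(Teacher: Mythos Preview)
Your argument for the inequality $\bar{\cR}\le\beta\,\cR^f$ is correct and is essentially the paper's own proof: split $\bar{\cR}=q_{\min}G(q_{\min})+\int_{q_{\min}}^{q_{\max}}G(q)\,dq$, apply A1 to pass from $G$ to $H$, and use $qH(q)=R(qf)\le\cR^f$.

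The tightness construction, however, is broken. With a single customer type ($m=1$) and a single product ($n=1$, $f=1$) you always have $\bar{\cR}=\theta_1\cR^*_1=\max_{p}p\,d(p)=\cR^f$, so the ratio is identically $1$ regardless of the demand curve. Moreover, with $m=n=1$ there is a single optimal price $\bar p_{11}$, hence $q_{\min}=q_{\max}=\bar p_{11}$ and $\beta=1$; the parameters $q_{\min},q_{\max}$ in the theorem are determined by the optimal personalized prices, not chosen freely as endpoints of a demand curve. Your computation of $\bar{\cR}$ as $q_{\min}\cdot\tfrac{1}{q_{\min}}+\int_{q_{\min}}^{q_{\max}}\tfrac{1}{q}\,dq$ is actually evaluating $\int_0^{q_{\max}}H(q)\,dq$, not $\int_0^{\infty}G(q)\,dq$: for a single type $G(q)=d(\bar p)\,\mathbb{1}[q\le\bar p]$ is a flat step, so $\int G=\bar p\,d(\bar p)=1$, not $\beta$.

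The fix is to use many types so that the $\bar p^j$ span the interval and $q_{\min}<q_{\max}$ genuinely. The paper's tight instance takes a continuum of types with willingness-to-pay in $[1,\rho]$ whose survival function is $d(p)=k\min(1,1/p)$ on $[0,\rho]$ with $k=1/\beta$; personalized pricing extracts the full area $\bar{\cR}=1$, while any single price $p\in[1,\rho]$ earns $p\,d(p)=k=1/\beta$, giving $\bar{\cR}/\cR^e=\beta$. Intuitively, the $1/p$ shape you wrote down is exactly the right \emph{aggregate} demand, but it must arise from heterogeneity across types---that is what creates the gap between $\bar{\cR}$ and $\cR^f$.
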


\begin{proof}
 \bean
\bar{\cR} & = & q_{\min}G(q_{\min}) +  \int_{q_{\min}}^{q_{\max}} q G(q) \frac{dq}{q}\\
& \leq &  R(q_{\min} f) + \int_{q_{\min}}^{q_{\max}} R(qf) \frac{dq}{q}\\
& \leq & \cR^f\left[1 + \int_{q_{\min}}^{q_{\max}} \frac{dq}{q} \right]\\
& = & \cR^f[1  + \ln(q_{\max}/q_{\min})] = \beta \cR^f
\eean
where the first equality follows form $G(q) = qG(q)/q$ for $q > q_{\min}> 0$; the first inequality follows from $qG(q) \leq qH(q) = R(qf)$ due to A1. The second inequality follows from $R(qf) \leq \cR^f$.

We now address the tightness of the bound. Suppose there are a continuum of types with willingness to pay in the interval $[1, \rho]$ for some $\rho > 1$. Assume that tail of the distribution of types is given by $d(p) = k\min(1, 1/p)$ over $p \in [0, \rho]$ with $k = 1/\beta$ and $\beta = 1 + \ln(\rho)$, so that it integrates to one. Then, personalized pricing results in expected profit $1$, so $\bar{\cR} = 1$. For uniform pricing, any price in the interval $[1, \rho]$ is optimal, resulting in expected profit $\cR^e = 1/\beta$. Clearly $\bar{\cR}/\cR^e = \beta = 1 + \ln(\rho)$ so the bound is tight.

\end{proof}

\begin{cor}
Suppose the firm imposes the constraint $p_{ij} \in [q_{\min} f_i, q_{\max}f_i]~~\forall i \in N, j \in M$
for exogenously given  $f$, $q_{\min} > 0$ and  $q_{\max} = \exp(k)q_{\min}$ for some $k > 0$. If A0-A1 hold then  $\bar{\cR} \leq (1 + k)\cR^f$ where $\bar{\cR}$ is the maximum personalized profit subject to the stated constraints and $\cR^f$ is the optimal price along $qf$ constrained to $q \in [q_{\min}, q_{\max}]$.
\end{cor}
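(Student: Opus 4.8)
The plan is to re-run the argument behind Theorem~\ref{thm:sf} with essentially no change, the one new ingredient being that the box constraint pins down the range of the factor multiplier. First I would fix, for each $j \in M$, an optimal \emph{constrained} price vector $\bar p^j$ maximizing $R_j$ over the box $\prod_{i \in N}[q_{\min} f_i, q_{\max} f_i]$; by A0 it is positive and finite, and it is automatically feasible, so $q_{\min} \le \bar p_{ij}/f_i \le q_{\max}$ for all $i,j$. Hence $\delta_{ij}(q) = 1$ on $[0,q_{\min}]$ and $\delta_{ij}(q) = 0$ on $(q_{\max},\infty)$, exactly as in the unconstrained development. The identity $\cR^*_j = \int_0^\infty G_j(q)\,dq$ therefore still holds (each $\delta_{ij}$ integrates to $\bar p_{ij}/f_i$, giving $\sum_{i\in N} \bar p_{ij} d_{ij}(\bar p^j) = R_j(\bar p^j)$), so that $\bar{\cR} = \int_0^\infty G(q)\,dq = q_{\min}G(q_{\min}) + \int_{q_{\min}}^{q_{\max}} G(q)\,dq$, where now $\bar{\cR}$ denotes the constrained personalized optimum.

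Next I would apply A1, which under the corollary's hypotheses holds for $G$ built from these constrained optima, to get $qG(q) \le qH(q) = R(qf)$. The key point distinguishing the constrained case is that for every $q \in [q_{\min},q_{\max}]$ (and in particular at the endpoint $q_{\min}$) the vector $qf$ is feasible for the restricted single-factor problem, hence $R(qf) \le \cR^f$ with $\cR^f := \max_{q \in [q_{\min},q_{\max}]} R(qf)$. Plugging in as in the theorem's proof gives
\[
\bar{\cR} \;\le\; R(q_{\min} f) + \int_{q_{\min}}^{q_{\max}} R(qf)\,\frac{dq}{q} \;\le\; \cR^f\!\left[1 + \int_{q_{\min}}^{q_{\max}}\frac{dq}{q}\right] \;=\; \cR^f\bigl[1 + \ln(q_{\max}/q_{\min})\bigr],
\]
and substituting $q_{\max} = \exp(k)\,q_{\min}$ turns the bracket into $1+k$. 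Equivalently, one may just invoke Theorem~\ref{thm:sf} as a black box on the constrained instance: the endogenous ratio $\max_{i,j}(\bar p_{ij}/f_i)/\min_{i,j}(\bar p_{ij}/f_i)$ is at most $q_{\max}/q_{\min} = \exp(k)$, so the theorem's constant $\beta$ is at most $1 + k$.

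I do not anticipate a genuine obstacle; the only points needing care are bookkeeping. I would want to confirm that A1, the identity $\cR^*_j = \int_0^\infty G_j\,dq$, and the boundary decomposition of $\bar{\cR}$ all survive the replacement of ``$q \ge 0$'' by the box, which they do precisely because $\bar p^j$ lies inside the box, and to state explicitly that $\cR^f$ must be read as the optimum over the \emph{restricted} range $q \in [q_{\min},q_{\max}]$, since that is exactly the range over which $R(qf) \le \cR^f$ is invoked. No tightness is claimed, so nothing further is required, though the construction at the end of the proof of Theorem~\ref{thm:sf} already takes place on a bounded willingness-to-pay interval and hence also witnesses optimality of the constant $1+k$ here.
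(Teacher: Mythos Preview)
Your proposal is correct and matches the paper's approach: the paper's own proof is the one-line black-box invocation of Theorem~\ref{thm:sf} that you describe at the end, observing that $\ln(q_{\max}/q_{\min}) = k$ forces $\beta = 1+k$. The integral re-derivation you give beforehand is more detailed than what the paper records but is exactly the content of Theorem~\ref{thm:sf}'s proof specialized to the constrained instance, so there is no genuine methodological difference.
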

\begin{proof}
The result follows directly from Theorem~\ref{thm:sf} since $\ln(q_{\max}/q_{\min}) = k$ resulting in $\beta = 1 +k$. Consequently $\bar{\cR} \leq (1+k)\cR^f$.
\end{proof}

We remark that Theorem~1 also holds for continuous customer types as long as $q_{\min}$ is bounded away from zero and $q_{\max}$ is finite.  Slightly sharper bounds can be obtained if the set of allowable prices is a finite set bounded away from zero. As an example, if $Q = \{q_1, \ldots, q_K\}$ with $q_k$ strictly decreasing in $k$, we obtain $\beta = \sum_{k=1}^K(q_k-q_{k+1})/q_k \leq K$ where for convenience we set $q_{K+1}: = 0$.  It is also possible to prove that the sharpened bound is tight regardless of the number of customer types. The proof follows the same logic replacing sums with integrals and changing the order of summation and is omitted for brevity.


\begin{prop}
\label{prop:f-reg}
The following two properties are sufficient conditions for A$1$.
\begin{itemize}
\item[P1] $d_{ij}(p)$ is increasing\footnote{We use the terms increasing and decreasing in the weak sense unless otherwise stated.} in $p_k$ for all $k \neq i$, for all $j \in M$.
\item[P2] $\sum_{i \in N} f_i d_{ij}(p)$ is decreasing in $p_i$ for all $i \in N$ and all $j \in M$.
\end{itemize}
\end{prop}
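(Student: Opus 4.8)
The plan is to establish A1 by proving it separately for each customer type and each price level. Since $G(q)=\sum_{j\in M}\theta_j G_j(q)$ with $G_j(q):=\sum_{i\in N}f_id_{ij}(\bar p^j)\delta_{ij}(q)$, and $H(q)=\sum_{j\in M}\theta_j H_j(q)$ with $H_j(q):=\sum_{i\in N}f_id_{ij}(qf)$, and all $\theta_j>0$, it suffices to show $G_j(q)\le H_j(q)$ for every $j\in M$ and every $q\ge 0$; then weighting by $\theta_j$ and summing recovers A1. So I fix $j$ and $q$, let $S:=\{i\in N:\delta_{ij}(q)=1\}=\{i\in N:qf_i\le \bar p_{ij}\}$ be the set of surviving products, and note $G_j(q)=\sum_{i\in S}f_id_{ij}(\bar p^j)$.

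The key device is the hybrid price vector $p^\ast:=\max(\bar p^j,qf)$ taken componentwise, so that $p^\ast_i=\bar p_{ij}$ for $i\in S$ and $p^\ast_i=qf_i$ for $i\notin S$; by construction $p^\ast\ge \bar p^j$, $p^\ast\ge qf$, and $p^\ast$ agrees with $\bar p^j$ in every coordinate belonging to $S$. I would then argue in two moves. First, passing from $\bar p^j$ up to $p^\ast$ changes only coordinates outside $S$, so for each fixed $i\in S$ the only prices that move are $p_k$ with $k\ne i$; hence P1 gives $d_{ij}(p^\ast)\ge d_{ij}(\bar p^j)$. Multiplying by $f_i>0$, summing over $i\in S$, and then appending the nonnegative terms $f_id_{ij}(p^\ast)$ for $i\notin S$ yields
\[
G_j(q)=\sum_{i\in S}f_id_{ij}(\bar p^j)\le\sum_{i\in S}f_id_{ij}(p^\ast)\le\sum_{i\in N}f_id_{ij}(p^\ast).
\]
Second, passing from $qf$ up to $p^\ast$ only raises coordinates (those in $S$), so by P2 the aggregate $f$-weighted demand $\sum_{i\in N}f_id_{ij}(\cdot)$ does not increase along the way, giving $\sum_{i\in N}f_id_{ij}(p^\ast)\le\sum_{i\in N}f_id_{ij}(qf)=H_j(q)$. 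Chaining the two displays gives $G_j(q)\le H_j(q)$, and summing over $j$ completes the proof. (The boundary cases are immediate: $q=0$ gives $S=N$ and $p^\ast=\bar p^j$, reducing to P2 alone; $q>q_{\max}$ gives $S=\varnothing$, so $G_j(q)=0$.)

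The only delicate point — and, I expect, the main obstacle to spot rather than to verify — is the choice of $p^\ast$. Property P1 controls only cross-price effects, so to move $\bar p^j$ toward $qf$ on the surviving coordinates I must leave those very coordinates untouched; property P2 controls the effect of own-price changes only on the full sum over $N$, never on a restriction to $S$, and it points in the useful direction only for price \emph{increases}. Taking the componentwise maximum is precisely the vector for which each property can be invoked in the direction that helps, and the slack between $\sum_{i\in S}$ and $\sum_{i\in N}$ is absorbed at no cost by nonnegativity of demand. Everything else is routine: coordinatewise monotonicity of a function extends along any monotone path by changing one coordinate at a time.
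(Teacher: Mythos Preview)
Your proof is correct and follows essentially the same route as the paper: both introduce the hybrid vector $p^\ast=\max(\bar p^j,qf)$, use P1 to pass from $\bar p^j$ to $p^\ast$ on the surviving coordinates $S$, and use P2 to compare the full $f$-weighted demand at $p^\ast$ with that at $qf$. If anything, your argument is slightly more streamlined---the paper also derives the intermediate inequality $\sum_{i\in S}f_id_{ij}(qf)\ge\sum_{i\in S}f_id_{ij}(p^\ast)$ via an extra application of P1 on the complement $N\setminus S$, but this refinement is not needed for the final bound, which already follows from nonnegativity as you observe.
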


We remark that A1 need only hold at $\bar{p}^j, j \in $ so conditions P1 and P2 are much stronger than needed. If $f = e$, then P1 and P2 together state that the products are weak substitutes and have the connected substitute property, see \citet{BGH}. These properties are satisfied by most pricing models studied in the literature including linear demand models, MNL models, Exponomial choice models \citep{alptekinouglu2016exponomial}, envy-free pricing models \citep{rusmevichientong2006nonparametric} and any mixture of them. Moreover, P1 and P2 are also satisfied in pricing models for which even finding a price vector that guarantees some (positive) constant fraction of the optimal non-personalized profit $\cR^*$ is $\mathcal{NP}$-hard\footnote{Consider a demand model where each customer type has a preference list. Assume further that consumers remove product $i$ from the list if its price is higher than $r_i$, and then select their top choice from the remaining products, if any. This model satisfies P1 and P2 for $f=e$ and it is equivalent to an assortment optimization problem in which $r_i$ is the profit for product $i$. Thus, the strongest negative result to date about the inapproximability of assortment optimization ($\mathcal{NP}$-hardness to approximate to within a factor of $\Omega(1/n^{1-\epsilon})$ for every $\epsilon>0$ \citep{aouad2018approximability}) carries over to the pricing models studied in this paper.}.
If $d^j(p)$ is  differentiable in $p$ for all $j \in M$, then the aggregate demand $d(p) = \sum_{j \in M}\theta_j d^j(p)$ is also differentiable. Then P2 implies that $\nabla d(p) e \leq 0$ where $\nabla d(p)$ is the Jacobian matrix. By P1, $\partial d_i(p)/\partial p_k \leq 0$ for all $i \neq k$, so together P1 and P2 imply that $\nabla d(p)$ is a P-matrix. As a result, its inverse exists and is non-negative and the demand function is injective. For a given $f \neq e$ we can transform the demand function via $d(p) \leftarrow \mbox{diag(f)}d(\mbox{diag}(f)^{-1}p)$ and one can verify that the test of weak substitution and  connected substitute properties are are equivalent to P1 and P2 above. Details of the proof can be found in the Appendix.

We next show how to construct reasonable choices of $f$ when $p^*$ is hard to find and the firm knows $\bar{p}^j, j \in M$. An economically motivated  choice is $f = \sum_{j \in M}\alpha_j \bar{p}^j$ with weights $\alpha_j = \theta_j \cR^j/\bar{\cR}, j \in M$. For such $f$, $\cR^f$ is called the \emph{Economic single factor} profit. While the economic factor works well in practice, it does not minimize $\rho$ among all possible vectors $f$. To find a robust $f$,  let $p^H_i := \max_{j \in M}\bar{p}_{ij}$, $p^L_i := \min_{j \in M}\bar{p}_{ij}$. Define the robust  factor by $f^*_i : =  \sqrt{p^L_i p^H_i}, i \in N$, and the \emph{Robust single factor} profit by $\cR^{f^*}$. Let $\rho_i := p^H_i/p^L_i ~~ \forall ~i \in N$, and $\rho^* = \max_{i \in N} \rho_i$.


\begin{thm}\label{thm:robust}
$\rho \geq \rho^*$. Moreover, $\rho^*$ is attained by $f^*$.
\end{thm}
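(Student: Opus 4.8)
The plan is to reduce everything to a clean expression for $\rho=\rho(f):=q_{\max}(f)/q_{\min}(f)$ in terms of the per-product ratios $\rho_i$. First I would observe that the maximum (resp.\ minimum) defining $q_{\max}$ and $q_{\min}$ can be taken over $j\in M$ first: for fixed $i$, $\max_{j} \bar p_{ij}/f_i = p^H_i/f_i$ and $\min_{j} \bar p_{ij}/f_i = p^L_i/f_i$, so that $q_{\max}(f)=\max_{i\in N} p^H_i/f_i$ and $q_{\min}(f)=\min_{i\in N} p^L_i/f_i$. This is just regrouping, but it is the step that makes the rest transparent.

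Next, for the lower bound $\rho(f)\ge\rho^*$, I would fix an arbitrary index $i_0\in N$ and bound the numerator from below and the denominator from above by their $i_0$-th terms: $\max_{i} p^H_i/f_i \ge p^H_{i_0}/f_{i_0}$ and $\min_{i} p^L_i/f_i \le p^L_{i_0}/f_{i_0}$. Dividing, the $f_{i_0}$ cancels and we get $\rho(f)\ge p^H_{i_0}/p^L_{i_0}=\rho_{i_0}$. Since $i_0$ was arbitrary, $\rho(f)\ge\max_{i_0\in N}\rho_{i_0}=\rho^*$. Note no positivity or finiteness issues arise here beyond A0, which guarantees $p^L_i>0$ and $p^H_i<\infty$.

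For attainment, I would plug in $f^*_i=\sqrt{p^L_i p^H_i}$ and compute $p^H_i/f^*_i=\sqrt{p^H_i/p^L_i}=\sqrt{\rho_i}$ and $p^L_i/f^*_i=\sqrt{p^L_i/p^H_i}=1/\sqrt{\rho_i}$. Hence $q_{\max}(f^*)=\max_i\sqrt{\rho_i}=\sqrt{\rho^*}$ and $q_{\min}(f^*)=\min_i 1/\sqrt{\rho_i}=1/\sqrt{\rho^*}$, so $\rho(f^*)=\rho^*$, which combined with the lower bound shows $f^*$ is a minimizer and $\min_f\rho(f)=\rho^*$.

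Honestly, there is no serious obstacle: the result is a one-line min--max manipulation once $q_{\max},q_{\min}$ are rewritten product-by-product, and $f^*$ is exactly the geometric mean that equalizes the per-product slack $p^H_i/f_i$ and $f_i/p^L_i$. The only thing to be slightly careful about is the direction of the inequalities when passing to a single coordinate (lower-bounding a max, upper-bounding a min), and stating that the optimal $f$ is unique only up to positive scaling — scaling $f$ by a constant leaves $\rho(f)$ unchanged, so "attained by $f^*$" should be read as "attained by any positive multiple of $f^*$."
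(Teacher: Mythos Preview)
Your proposal is correct and follows essentially the same approach as the paper: both arguments lower-bound $\rho(f)$ by restricting the outer max and min in $q_{\max}/q_{\min}$ to a single coordinate $k$ so that $f_k$ cancels, yielding $\rho(f)\ge\rho_k$ for every $k$, and both verify attainment by direct substitution of $f^*_i=\sqrt{p^L_ip^H_i}$ and observing that the same index achieves the extremes in $q_{\max}$ and $q_{\min}$. Your preliminary rewriting $q_{\max}(f)=\max_i p^H_i/f_i$, $q_{\min}(f)=\min_i p^L_i/f_i$ is a nice clarification but not a different route.
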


\begin{proof}
Let $f$ be any positive vector. Then for any $k \in N$
$$\rho =  \frac{\max_{i \in N, j \in M} \bar{p}_{ij}/f_i}{\min_{i \in N, j \in M} \bar{p}_{ij}/f_i} \geq \frac{\max_{j \in M} p_{kj}/f_k}{\min_{i \in N, j \in M} \bar{p}_{ij}/f_i} \geq
\frac{\max_{j \in M} p^*_{kj}/f_k}{\min_{j \in M} p^*_{kj}/f_k} = \rho_k.$$
Since this holds for all $k \in N$, it follows that $\rho \geq \rho^*$.  We next show that $\rho^*$ is attained by $f^*$. By construction,  $p_{ij}/f^*_i \in  [(p^L_i/p^H_i)^{0.5}, (p^H_i/p^L_i)^{0.5}]$ with the two bounds attained. Consequently, $q_{\max} = \max_{i \in N} (p^H_i/p^L_i)^{0.5}$, and
and $q_{\min} =  \min_{i \in N} (p^L_i/p^H_i)^{0.5}$.
Clearly the product that attains the maximum in $q_{\max}$ also attains the minimum in $q_{\min}$, so
$\rho = \max_{i \in N} p^H_i/p^L_i = \max_{i \in N} \rho_i = \rho^*$.
\end{proof}

\section{Clustering Consumer Types}\label{sec:clustering}

Suppose that $m$ is large and the price vectors $\bar{p}^j, j \in M$ are dissimilar resulting in a large $\rho$ and therefore a poor performance guarantee.  The firm can potentially improve the worst case performance if it can partition $M$ into $K$ collectively exhaustive and mutually exclusive clusters, so  $M = \cup_{k = 1}^KM_k$. We have already dealt with the case $K =1$, while $K = m$ corresponds to personalized pricing. The problem is interesting for $1 < K < m$.

For a given partition, and a given positive vector $f^k$ for cluster $k$, the worst case performance is $1 + \ln(\rho(M_k))$ where $\rho(M_k) $ is the corresponding worst case ratio for cluster $M_k$ and factor $f^k$ and $\rho^*(M_k) \leq \rho(M_k)$ is the minimal ratio corresponding to the robust factor for cluster $M_k$.  More formally, $\rho^*(M_k) = \max_{i \in N} \rho_{ik}$, where
$\rho_{ik} = p^H_{ik}/p^L_{ik}$, $p^H_{i,k} := \max_{j \in M_k}\bar{p}_{ij}$, and $p^L_{i,k} := \min_{j \in M_k}\bar{p}_{ij}$. We define the clustering problem as finding a partition to $\min_{M_1,\ldots,M_K} \max_k \rho^*(M_k)$.

This problem is known in the literature as that of  minimizing the maximum inter-cluster distance in the context of graphs \citep{gonzalez1985clustering} and also as the bottleneck problem \citep{hoschbaumshmoys}. To see that equivalence, consider a graph $G$ with vertices $(i,j)$ for all $i \in N$ and $j \in M$. There are edges between any two nodes that share the same product,  say $(i,j)$ and $(i, l)$. The {\em distance} between the two adjacent nodes in the network is given by $\max(\bar{p}_{ij}, p^*_{il})/\min(\bar{p}_{ij}, p^*_{il})$. The reader can confirm that the maximum distance for a graph $G$, is equal to $\rho^*(G)$, and that the distance satisfies the triangle inequality.

Fortunately, there is a 2-factor approximation polynomial time algorithm for this problem, which is the best possible unless $\mathcal{P}=\mathcal{NP}$, see \citet{gonzalez1985clustering} and \citet{hoschbaumshmoys}. We also used the $k$-means clustering heuristic and compare their performance. We remark that once the clusters are formed, each cluster will price along a positive vector which may or may not be the robust choice for that cluster. This is because frequently the economic factor performs better than the robust factor even though the robust factor gives the best performance guarantee.

%

\section{Applications}
\label{sec:app}

In this section we discuss several applications to our results, including linear demands, the latent class MNL, and Non-Linear Pricing.

\subsection{Linear Demands}\label{sec:linear_demand}

We first briefly review the representative consumer problem that results in the linear demand model. The task of the representative consumer is to solve the problem $\max_{q \geq 0} [(u-p)'q - q'Sq]$ where $u$ is the vector of gross utilities, $u-p$ is the vector of net utilities, and $S$ is a positive definitive matrix. The solution that ignores the non-negativity constraints yields $q^*  = B(u-p) = a - Bp$ where $a := Bu$ and $B := (S +S')^{-1}$.  Notice that by construction $B$ is symmetric and positive definitive. Let $P := \{p \geq 0: a - Bp \geq 0\}$. We assume that $a$ has positive components. Then $d(p) = a - Bp$ for all $p \in P$. Maximizing $R(p) = p'(a - Bp)$ yields $p^* = 0.5B^{-1}a = 0.5u$, $d(p^*) = 0.5a$ and $R(p^*) = 0.25 u'Bu > 0$, so $p^* \in P$. For $p \notin P$, the solution to the representative consumer's problem is equivalent to solving the linear complementarity problem $y \geq 0$, $d(p-y) \geq 0$,   and $y'd(p-y) = 0$, see \citet{gallegotopaloglu}.

Suppose that for all $j \in M$, $d^j(p) = B_j(u^j-p) = a^j -B_jp$ for $p \in P_j : = \{p \geq 0: a^j - B_jp \geq 0\}$.
where $u^j$ and $a_j$ are positive vectors, and $B_j$ is a symmetric positive definitive matrix. Then $\bar{p}^j = 0.5B^{-1}_ja^j = 0.5 u^j$ is in $P_j$,  and $\bar{\cR} = \sum_{j \in M}\theta_j \cR_j$ can be computed without problems.

%

\begin{cor} Suppose  that $B_j$ has non-positive off-diagonal elements for all $j \in M$. Then
Therorem~\ref{thm:sf} holds for positive vectors $f$ such that $B_jf \geq 0$ for all $j \in M$.  Moreover, if $B_j$ is an M-matrix for all $j \in M$ then the Theorem~\ref{thm:sf} holds for all positive $f$.
 \end{cor}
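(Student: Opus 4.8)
The corollary specializes Theorem~\ref{thm:sf}, so for the first claim the plan is to verify A0 and A1 and invoke the theorem. A0 is immediate, because $R_j(p)=p^\top a^j-p^\top B_jp$ is strictly concave with unconstrained maximizer $\bar p^j=\tfrac12 B_j^{-1}a^j=\tfrac12 u^j$, which is positive and finite since $u^j$ is positive, and $\bar p^j\in P_j$ since $a^j-B_j\bar p^j=\tfrac12 a^j\ge 0$. For A1 I would check the sufficient conditions P1 and P2 of Proposition~\ref{prop:f-reg}. On the linear region $P_j$ the Jacobian of $d^j$ is $-B_j$, so $\partial d^j_i/\partial p_k=-(B_j)_{ik}\ge 0$ for $k\neq i$ because $B_j$ has non-positive off-diagonal entries (P1), and $\partial_{p_k}\!\big(\sum_i f_i d^j_i(p)\big)=-(f^\top B_j)_k=-(B_jf)_k\le 0$ by symmetry of $B_j$ together with the hypothesis $B_jf\ge 0$ (P2). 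Off $P_j$ the demand is the complementarity truncation of $a^j-B_jp$; on the face where a set $Z$ of products is clamped to zero the active products $A=N\setminus Z$ see the Schur complement $\widehat B=B_{AA}-B_{AZ}B_{ZZ}^{-1}B_{ZA}$, which is again symmetric positive definite with non-positive off-diagonals, and (using that the clamped products are precisely those whose linear demand would turn negative) one checks $\widehat Bf_A\ge 0$; thus P1 and P2 persist piece by piece, A1 follows, and Theorem~\ref{thm:sf} applies.

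When in addition each $B_j$ is an M-matrix, this route still covers only the positive $f$ with $B_jf\ge 0$, so the extension to \emph{every} positive $f$ is the part I expect to be the real work. The difficulty is genuine: a symmetric Stieltjes matrix can have a row with a negative sum, so for a general positive $f$ one has $(B_jf)_k<0$ for some $k$, P2 fails on the interior of $P_j$, and in fact A1 itself can fail --- a single-segment, two-product instance with an M-matrix $B_1$ and $f=e$ already gives $G(q)>H(q)$ on a range of $q$. Hence Theorem~\ref{thm:sf} cannot be applied verbatim and $\bar{\cR}\le\beta\cR^f$ must be proved directly; moreover the argument cannot discard the truncated region, since replacing the demand by the pure linear map $a^j-B_jp$ makes the analogous inequality fail as well.

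The structural fact I would lean on is $B_j^{-1}\ge 0$: along any ray $\{qf:q\ge 0\}$ it forces the complementarity map to bind early, so that before the $f$-weighted aggregate $q\mapsto f^\top d^j(qf)$ can cease to be non-increasing some product's demand has already been clamped to zero, and on the reduced face the Schur-complemented matrix is again an M-matrix for which the good sign condition is restored. I would make this quantitative by following the active set as $q$ increases, showing $f^\top d^j(qf)$ is piecewise affine and non-increasing, and combining this with the closed form $\cR^*_j=\tfrac14 (a^j)^\top B_j^{-1}a^j$ and a piecewise analysis of $R(qf)$ to recover the bound in the spirit of the proof of Theorem~\ref{thm:sf}. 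The delicate steps --- and where I expect the main obstacle --- are the bookkeeping of how the active sets and Schur complements evolve and certifying that truncation always activates early enough; it may in the end be cleanest to appeal to the dedicated linear-demand analysis of \citet{gallegotopaloglu}, which is closely related to the envy-free pricing guarantees of \citet{berbeglia2020assortment} that this paper generalizes. Tightness transfers by approximating the extremal instance in the proof of Theorem~\ref{thm:sf} with products whose per-segment optima are geometrically spread across $[q_{\min},q_{\max}]$.
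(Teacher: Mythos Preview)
Your treatment of the first assertion coincides with the paper's: check P1 (non-positive off-diagonals of $B_j$) and P2 (symmetry of $B_j$ together with $B_jf\ge 0$), invoke Proposition~\ref{prop:f-reg} to get A1, and apply Theorem~\ref{thm:sf}. The paper does not enter the complementarity region in this proof --- that is deferred to a separate proposition --- so your Schur-complement digression is extra work here.

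For the M-matrix clause the paper's argument is far shorter than what you propose. It simply asserts that when $B_j$ is an M-matrix the hypothesis $B_jf>0$ of the first part is automatically satisfied for \emph{every} positive $f$. The justification offered is: $B_j^{-1}$ has positive entries, so $\epsilon B_j^{-1}e$ is a small positive vector, so $f\ge \epsilon B_j^{-1}e$ for suitable $\epsilon>0$, and ``multiplying both sides by $B_j$'' gives $B_jf\ge\epsilon e>0$.

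Your suspicion that this cannot be right is well founded. A $Z$-matrix does not preserve componentwise inequalities, so the multiplication step is illegitimate, and the conclusion it aims at is false: $B=\bigl(\begin{smallmatrix}2&-3\\-3&5\end{smallmatrix}\bigr)$ is a symmetric M-matrix (its inverse $\bigl(\begin{smallmatrix}5&3\\3&2\end{smallmatrix}\bigr)$ is entrywise positive), yet $Be=(-1,2)^\top\not\ge 0$. This is exactly your observation that a Stieltjes matrix can have a negative row sum, so that P2 --- and with it the route through Proposition~\ref{prop:f-reg} --- breaks down for general positive $f$. The paper's proof of the second assertion therefore contains a genuine gap; your longer programme of tracking the active set along the ray $qf$ and using $B_j^{-1}\ge 0$ to force truncation before the $f$-weighted demand can turn increasing is not overkill but the kind of argument that would actually be needed to salvage the claim.
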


\begin{proof}
From the assumptions of the corollary,  P1 and P2 holds for all $j \in M$ . Since P1 and P2 are sufficient for A1, the result of Theorem~\ref{thm:sf} hold.  Moreover, if $B_j$ is an M-matrix then $B^{-1}_j$ has positive components. This implies that the vector $\epsilon B^{-1}_j e$ has positive components that can be made arbitrarily small, so for any vector positive vector $f$ we can find an $\epsilon > 0$ such that have $f \geq \epsilon B^{-1}_j e$. Multiplying both sides by $B_j$ shows that $B_j f \geq \epsilon e > 0$.
\end{proof}

At this point we know that $\bar{\cR} \leq \beta \cR^f \leq \beta \cR^*$ where $\cR^f$ and $\cR^*$ are interpreted as $\cR^f = \max_{q \geq 0}qf'(a- qBf)$ and $\cR^* = \max_{p \geq 0}q'(a- Bp)$ where $a  = \sum_{j \in M} \theta_j a^j$ and $B = \sum_{j \in M} \theta_j B_j$. The caveat is that optimal solutions to this problems, say $q^*f$ and $p^*$, may be outside $P_j$ for some $j \in M$ resulting in negative demands for some products for some customer types. This cast as question of whether the profits from pricing at $q^*f$ or at $p^*$ will actually satisfy the guarantees of Theorem~\ref{thm:sf}. We will show that $\bar{\cR} \leq \beta \cR^f$ and $\bar{\cR} \leq \beta \cR^*$ continue to hold even after the adjustments required to ensure that all demands are non-negative. To see this in a generic form, we will argue that the actual profit when $d(p)$ has negative components is at least as large as $R(p) = p'd(p)$.
\begin{prop}
The profit under the representative consumer model is equal to $R(p) + p'By \geq R(p)$ when $p \notin P$.
\end{prop}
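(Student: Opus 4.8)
The plan is to analyze what the representative consumer actually purchases when the nominal demand $d(p) = a - Bp$ has negative components, and to show the resulting profit exceeds $R(p) = p'd(p)$ by exactly $p'By$, where $y$ is the complementarity slack vector. First I would recall the structure stated in the excerpt: when $p \notin P$, the representative consumer's problem $\max_{q \geq 0}[(u-p)'q - q'Sq]$ has a solution characterized by the linear complementarity conditions $y \geq 0$, $d(p-y) \geq 0$, and $y'd(p-y) = 0$. The actual demand realized by the consumer is $q = d(p-y) = a - B(p-y) = d(p) + By$, since only the products with $y_i = 0$ are purchased in positive quantity and the effective price is $p - y$.

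The next step is to compute the firm's actual profit, which is revenue at the \emph{posted} prices $p$ times the \emph{realized} quantities $q = d(p-y)$, i.e. $p'q = p'd(p-y) = p'(d(p) + By) = R(p) + p'By$. So the identity $\text{profit} = R(p) + p'By$ follows immediately from the complementarity characterization. It then remains to argue $p'By \geq 0$. Here I would use that $p \geq 0$ (prices are non-negative), $y \geq 0$ (complementarity slack), and $B$ has non-negative entries wherever it matters — but more carefully, $B$ is symmetric positive definite and in the relevant corollaries is an M-matrix or has non-negative inverse; the cleaner route is to note $p'By = p'B y$ and use the complementarity relation $y'd(p-y) = 0$, i.e. $y'(a - B(p-y)) = 0$, which gives $y'Bp = y'a - y'By = y'a + y'B y \cdot(-1)$... so I would instead write $y'Bp = y'a + y'By$ after rearranging $y'a = y'Bp - y'By$, hence $p'By = y'Bp = y'a + y'By$. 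Since $y \geq 0$, $a > 0$, and $B$ is positive definite so $y'By \geq 0$, we get $p'By = y'a + y'By \geq 0$, with the desired inequality.

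**The main obstacle** I anticipate is getting the sign argument exactly right: naively one wants to claim $p'By \geq 0$ from $p, y \geq 0$ and "$B \geq 0$", but $B$ being positive definite does not make its entries non-negative, and even under the M-matrix hypothesis it is $B^{-1}$, not $B$, that is non-negative. The rearrangement via the complementarity identity $y'(a - B(p-y)) = 0$ sidesteps this cleanly, because it converts $y'Bp$ into $y'a + y'By$, both terms manifestly non-negative given $a > 0$, $y \geq 0$, and positive definiteness of $B$. I would present the proof in that order: (i) identify realized demand as $d(p-y)$; (ii) write actual profit as $p'd(p-y) = R(p) + p'By$; (iii) expand the complementarity condition to get $p'By = y'a + y'By \geq 0$; (iv) conclude. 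A minor point to state for completeness is that when $p \in P$ the slack is $y = 0$ and the identity degenerates to the usual $R(p)$, so the proposition is vacuous there and the stated hypothesis $p \notin P$ is exactly the interesting case.
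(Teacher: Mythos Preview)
Your proposal is correct and matches the paper's proof essentially step for step: write the realized profit as $p'd(p-y) = R(p) + p'By$, then expand the complementarity condition $y'(a - Bp + By) = 0$ to obtain $y'Bp = y'a + y'By$, invoke the symmetry of $B$ to identify $p'By = y'Bp$, and conclude non-negativity from $a \geq 0$, $y \geq 0$, and positive definiteness. Your anticipation of the obstacle (that entrywise non-negativity of $B$ is unavailable, so the complementarity identity is the right lever) is exactly the point.
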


\begin{proof}
 The expected profit associated with a vector $p \notin P$ is given by $p'd(p-y) = p'd(p) + p'By = R(p) + p'By$  where $y \geq 0, d(p-y) \geq 0$ and $y'd(p-y) = 0$. The complementary slackness condition $y'd(p-y) = 0$ can be written as $y'[a - Bp + By] = 0$. By the symmetry of $B$, $p'By = y'Bp = y'[a + By] = y'a + y'By  \geq 0$. The inequality follows from $a \geq 0, y \geq 0$ and $y'By \geq 0$ since $B$ is positive definitive.
\end{proof}

We can now apply the result to each customer type that has negative demands at $p$. In particular, if there is a customer type with negative demands at $q^*f$, then the firm will see profits $R_j(q^*f) + q^*f'B_jy^j \geq R_j(q^*f)$ from type $j$ customers, so $\cR^f  = \max_qR(qf)$ is a lower bound on the aggregate profit over all customer types at $q^*f$. Consequently, the profit from the single factor model is at least $\cR^f \geq \bar{\cR}/\beta$.  In a similar way, $\cR^* = \max_p R(p)$ is a lower bound of the aggregate profits at $p^* \in \arg\max R(p)$, so the profit under $p^*$ is at least $\cR^* \geq \bar{\cR}/\beta$. One must be aware, however, that for multiple customer types, $p^* = 0.5 B^{-1}a$ is not necessarily optimal if there are customer types with negative demands at this price vector. To find a true optimal solution to the problem the firm needs to solve
$\max_{p\geq 0, y^j, j \in M}\sum_{j \in M} \theta_j[R_j(p) + p'B_jy^j]$ subject to $d^j(p-y^j) \geq 0, y^j \geq 0$ and $y^j_id_{ij}(p-y^j) = 0$ for all $i \in N, j \in M$. The solution $p^* = 0.5B^{-1}a$ together with the corresponding $y^j$s that solve the linear complementarity problem for market segments with negative demands is only a heuristic for this problem, so our results continue to hold if the more complex problem is solved, with a similar more sophisticated program for pricing along a factor $f$.

Figure \ref{fig_linear_demand} reports a series computational results in order to evaluate the performance of different pricing strategies. For each value of $n$ and $m$, we generated 20 random instances and we reported the average profit as a percentage of the maximum profit that can be obtained using personalized pricing. The weight of each segment $j$ was set to $\theta_j=x_{j}/\sum_{l=1}^mx_l$ where each $x_{l}$ is a uniform random number between 0 and 1. For each segment $j$ we randomly generated the matrix $B_j$ ensuring it is symmetric and positive definitive and satisfies P1 and P2. The vector $a^j$ was generated uniformly random from $(0,1]^n$. The percentages shown are the average percentage of the profit obtained with respect to the best personalized pricing strategy. As we can see, the economic factor outperformed the robust factor, and both performed significantly better than uniform pricing based on $f = e$. The lower right-hand table for the optimal price uses the heuristic $p^* = 0.5B^{-1}a$ adjusting demands by solving the complementary slackness for customer types with negative demands. It performance is similar to that of the economic factor.

\begin{figure}[H]
\centering
\includegraphics[scale=0.30]{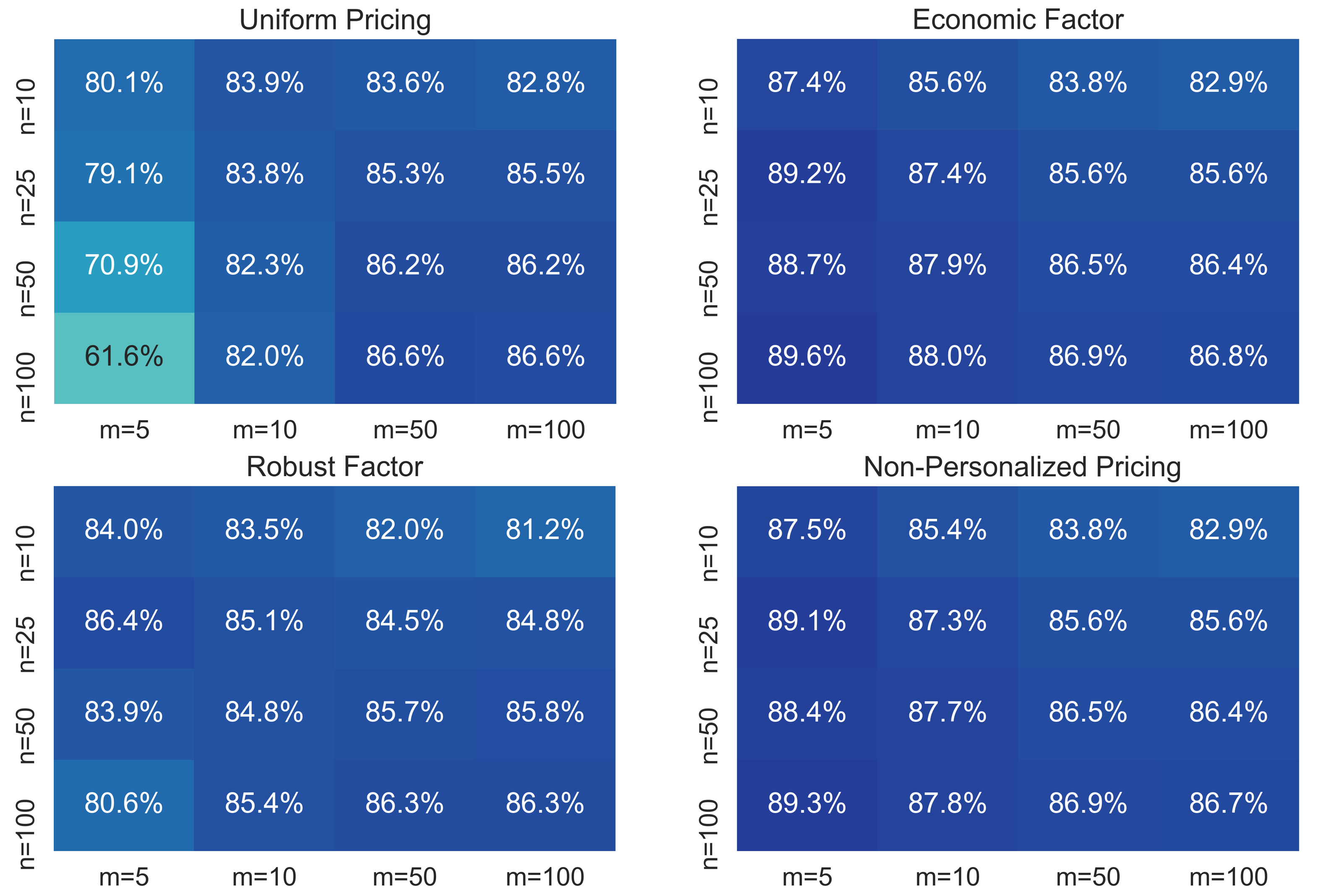}
\caption{Average performance of different pricing strategies under the linear demand model.}
\label{fig_linear_demand}
\end{figure}

Figure \ref{fig_linear_demand_clustering} reports another set of experiments to quantify the advantages of clustering consumer segments into two clusters ($k=2$) under the economic and the robust pricing strategies. Two clustering algorithms were implemented. The first one is the standard k-means algorithm in which each segment $j$ was assigned the price vector $\bar{p}^j$ as its representative point in an $n$-dimensional space. The second is the \emph{farthest point first} (FPF) proposed by \cite{gonzalez1985clustering} where the distance matrix is set as described in Section \ref{sec:clustering}. For these experiments the number of consumer segments was set to $m=6$. As can be seen, the best combination was the economic factor coupled with $k$-means and the worse was the robust factor with FPF.

\begin{figure}[H]
\centering
\includegraphics[scale=0.30]{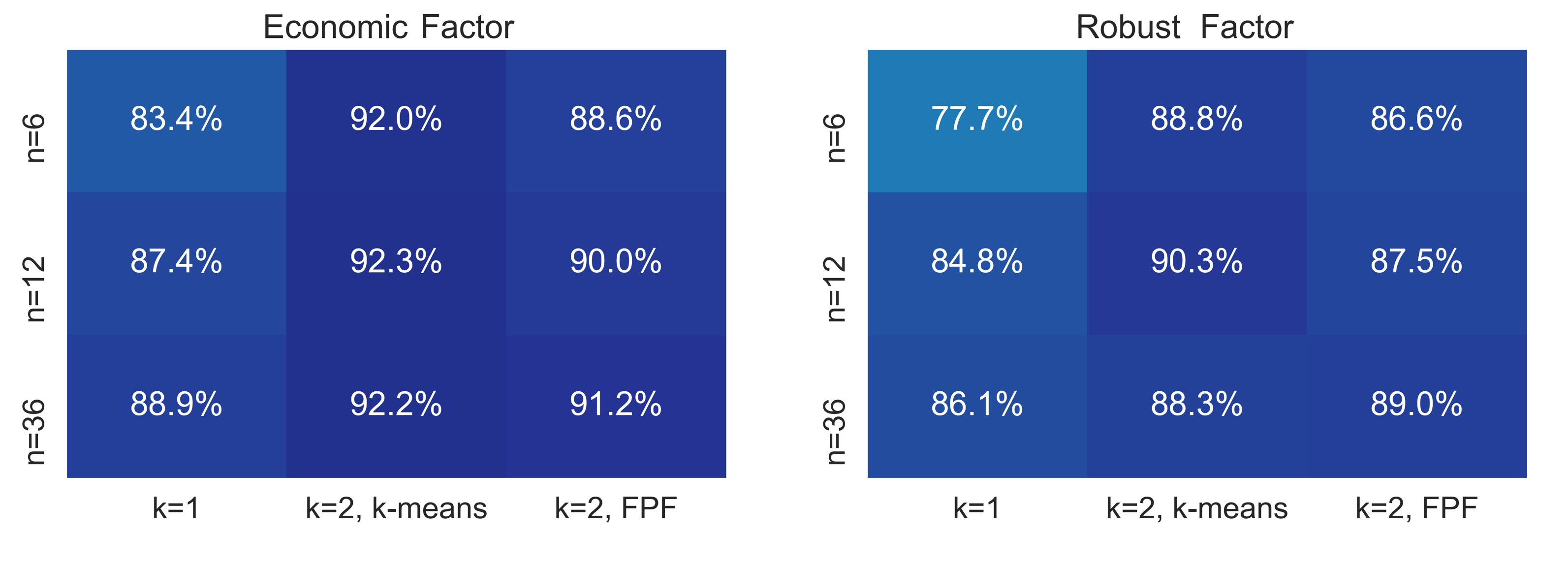}
\caption{Average performance of Economic and Robust pricing strategies for instances with 6 segments with and without clustering with $k=2$.}
\label{fig_linear_demand_clustering}
\end{figure}

\subsection{Latent Class MNL}

Suppose that $d^j(p)$ is the expected demand from an MNL model, so
$$d_{ij}(p) = \frac{\exp(a_{ij} - b_{ij}p_i)}{1 + \sum_{k \in N} \exp(a_{kj}- b_{kj} p_k)}~~~~\forall~~~k \in N.$$
The matrix of partial derivatives is given by
$\nabla d^j(p) = \mbox{diag}(b^j) \left[d^j(p)d^{j}(p)' - \mbox{diag}(d^j(p))\right]$.
Since the off-diagonal elements are non-negative we see that $d_{ij}(p)$ is increasing in $p_k, k \neq i$ and P1 holds.  P2 hold for all positive vectors $f$ such that $\nabla d^j(p)f \leq 0$, or equivalently for all positive vectors $f$ such that $d^j(p)'f \leq \min_{i \in N}f_i$ for all $j \in M$. We can scale $f$ without loss of generality so that $\min_{i \in N}f_i = 1$, which reduces the condition to $\sum_{i \in N} f_id_{ij}(p) \leq 1$. This clearly holds for $f = e$ on account of $\sum_{i \in N}d_{ij}(p) = 1 - d_{0j}(p) \leq 1$. For any positive $f$ the condition reduces to $\sum_{i \in N}(f_i -1) \exp(a_{ij} - b_{ij} p_i) \leq 1$ for all $j \in M$. We remark that A1 requires the condition to hold only at $p = \bar{p}^j$, for $j \in M$. We know that for each consumer type the optimal price is of the form $\bar{p}_{ij} = 1/b_{ij} + m_j$ where $m_j$ is the adjusted markup for product type $j$ consumers, so the condition is easy to check.  In particular, if $b_{ij} = b_j$ for each $i \in N$, and $j \in M$, then both the economic and the robust factors can be taken to be equal to $e$.

\begin{cor}
For the LC-MNL $\bar{\cR} \leq \beta\cR^e \leq \beta \cR^*$
without any further conditions since P1 and P2 hold for $f = e$  for all MNL models with arbitrary price sensitivities. In addition, if the $b_{kj} = b_j$ is independent of $k \in N$ for each $j \in M$, then the economic and the robust factors are equivalent to $e$.  Finally,  $\bar{\cR} \leq \beta \cR^f \leq \beta \cR^*$
holds for all positive vectors $f$ such that $d^j(\bar{p}^j)f \leq \min_{i \in N}f_i$ for all $j \in M$.
\end{cor}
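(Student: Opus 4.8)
The plan is to derive the corollary directly from Theorem~\ref{thm:sf} together with Proposition~\ref{prop:f-reg}, checking that the hypotheses A0 and A1 are met in each of the three assertions. First I would dispatch A0: for the MNL model each $R_j$ is a smooth function with an interior maximizer, and the well-known markup structure $\bar{p}_{ij} = 1/b_{ij} + m_j$ (with $m_j$ the common adjusted markup solving the fixed-point/Lambert-type equation for type $j$) shows the optimal prices are positive and finite, so A0 holds unconditionally. The bulk of the work is verifying A1, and here I would lean on Proposition~\ref{prop:f-reg}: it suffices to establish P1 and P2 (P1 unconditionally, P2 under the stated factor condition). P1 follows immediately from the displayed Jacobian $\nabla d^j(p) = \mathrm{diag}(b^j)[d^j(p)d^j(p)' - \mathrm{diag}(d^j(p))]$, whose off-diagonal entries $b_{ij} d_{ij}(p) d_{kj}(p)$ are nonnegative, so each $d_{ij}$ is increasing in $p_k$ for $k \neq i$.

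For P2 with $f = e$: the aggregate $f$-weighted demand for type $j$ is $\sum_{i} d_{ij}(p) = 1 - d_{0j}(p)$, and since increasing $p_i$ weakly decreases $d_{ij}$ (diagonal Jacobian entry $b_{ij}(d_{ij}^2 - d_{ij}) \le 0$ dominates, or more directly each own-price derivative of the sum $1 - d_{0j}$ is $-b_{ij} d_{0j} d_{ij} \le 0$), P2 holds for every MNL model regardless of the $b_{ij}$. This gives the first assertion $\bar{\cR} \le \beta \cR^e \le \beta \cR^*$, the last inequality being the chain $\cR^* \le \bar{\cR}$ and $\cR^e \le \cR^*$ noted in the text. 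The second assertion — that when $b_{kj} = b_j$ is independent of $k$ the economic and robust factors both equal $e$ — I would argue by noting that in that case $\bar{p}_{ij} = 1/b_j + m_j$ is the same across all products $i$ for a fixed type $j$; hence $p^H_i = \max_j(1/b_j + m_j)$ and $p^L_i = \min_j(1/b_j + m_j)$ are both independent of $i$, so $f^*_i = \sqrt{p^L_i p^H_i}$ is a constant vector, i.e. proportional to $e$; and the economic factor $\sum_j \alpha_j \bar{p}^j$ is likewise a convex-type combination of constant-in-$i$ vectors, hence constant in $i$, hence proportional to $e$. (Scaling of $f$ is immaterial since $\cR^f$ depends only on the ray.)

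For the third assertion I would re-examine P2 for general positive $f$. From $\nabla d^j(p) f \le 0$ being equivalent (after multiplying out the rank-one-minus-diagonal structure) to $d^j(p)'f \le \min_{i} f_i$ — as the excerpt already records — P2 holds at every $p$ where this inequality is satisfied. But Proposition~\ref{prop:f-reg} is stronger than needed: the remark following it states A1 need only hold at the personalized optima $\bar{p}^j$. So I would invoke the weaker requirement directly: if $d^j(\bar{p}^j)'f \le \min_i f_i$ for all $j$, then P2 (hence the local version of A1) holds at the relevant points, and Theorem~\ref{thm:sf} applies to yield $\bar{\cR} \le \beta \cR^f \le \beta \cR^*$. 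The main obstacle I anticipate is being careful that it genuinely suffices to check the monotonicity/substitution conditions only at the points $\bar{p}^j$ rather than globally — this requires tracing back through the proof of Theorem~\ref{thm:sf}, where $G(q) \le H(q)$ is the only place demands enter and $G$ is evaluated exactly at $\bar{p}^j$, so the pointwise check is legitimate; everything else is routine verification of signs in the MNL Jacobian.
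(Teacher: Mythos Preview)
Your approach matches the paper's: the corollary is not given a separate proof there but is a summary of the discussion immediately preceding it, which computes the MNL Jacobian to verify P1 globally, checks P2 for $f=e$ via $\sum_i d_{ij}(p)=1-d_{0j}(p)\le 1$, and observes that when $b_{kj}=b_j$ the optimal prices $\bar{p}_{ij}=1/b_j+m_j$ are constant in $i$, so both the robust and economic factors collapse to a multiple of $e$. Your reconstruction of all three parts follows this line exactly.

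On the third assertion, your instinct to flag the pointwise check is good, but the justification you offer is misdirected. You propose tracing through Theorem~\ref{thm:sf} and argue that since $G$ is evaluated at $\bar{p}^j$ the pointwise condition suffices; however $H(q)$ is evaluated at $qf$, so the inequality $G(q)\le H(q)$ is not settled by inspecting demands only at $\bar{p}^j$. The place where P2 actually enters is the proof of Proposition~\ref{prop:f-reg}, and there it is invoked along the path from $qf$ to $\max(\bar{p}^j,qf)$, not merely at $\bar{p}^j$. The paper itself only asserts in the paragraph before the corollary that ``A1 requires the condition to hold only at $p=\bar{p}^j$'' without further argument, so you are matching its level of rigor; but if you wish to close the gap rather than inherit it, you would need either a direct verification of A1 for the MNL under the stated condition on $f$, or an MNL-specific argument that the inequality $d^j(p)'f\le\min_i f_i$ propagates from $\bar{p}^j$ to the intermediate price vectors used in the proof of Proposition~\ref{prop:f-reg}.
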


We tested the performance of the different heuristics under the LC-MNL model and reported the results in Figure \ref{fig_lc-mnl_demand}. The percentages shown are the average percentage of the profit obtained with respect to the best personalized pricing strategy. For each value $n$ and value $m$ reported, we generated 20 random instances with $n$ products and $m$ segments. The mean utility of product $i$ to consumer segment $j$ is $u_{ij} = a_{ij} - b_{ij}p_i$ where the intrinsic product utility $a_{ij}$ were randomly chosen following a procedure proposed by \cite{rusmevichientong2014assortment} \footnote{Specifically, the intrinsic utility of product $i$ for consumer segment $j$ is defined as $a_{ij}:=\ln((1-\sigma_i)v_{ij}/n)$ with probability $p=0.5$ and $a_{ij}::= \ln((1+\sigma_i)v_{ij}/n)$ otherwise. The values $v_{ij}$ and $\sigma_i$ are realizations from a uniform distribution $[0,10]$ and $[0,1]$ respectively.} and the (segment and product dependent) price sensitivities $b_{ij}$ were randomly chosen from a symmetric triangular distribution between 0 and 2. We can observe that while uniform pricing does relatively well (obtaining between 60.6\% to 91\% of the optimal personalized profit) it is surpassed by the Economic and Robust strategies with get at least 76.3\% and 75\% respectively. The values for the non-personalized pricing are not necessarily the optimal ones \footnote{This is an NP-hard problem.} since they were obtained using a multi-variable non-linear solver in Python. This strategy requires much broader computational resources than the other three methods which simply rely on a single variable optimization. For example, when $n=100$ the solver took on average over 22 times more time than any of the other 3 strategies.

\begin{figure}[H]
\centering
\includegraphics[scale=0.30]{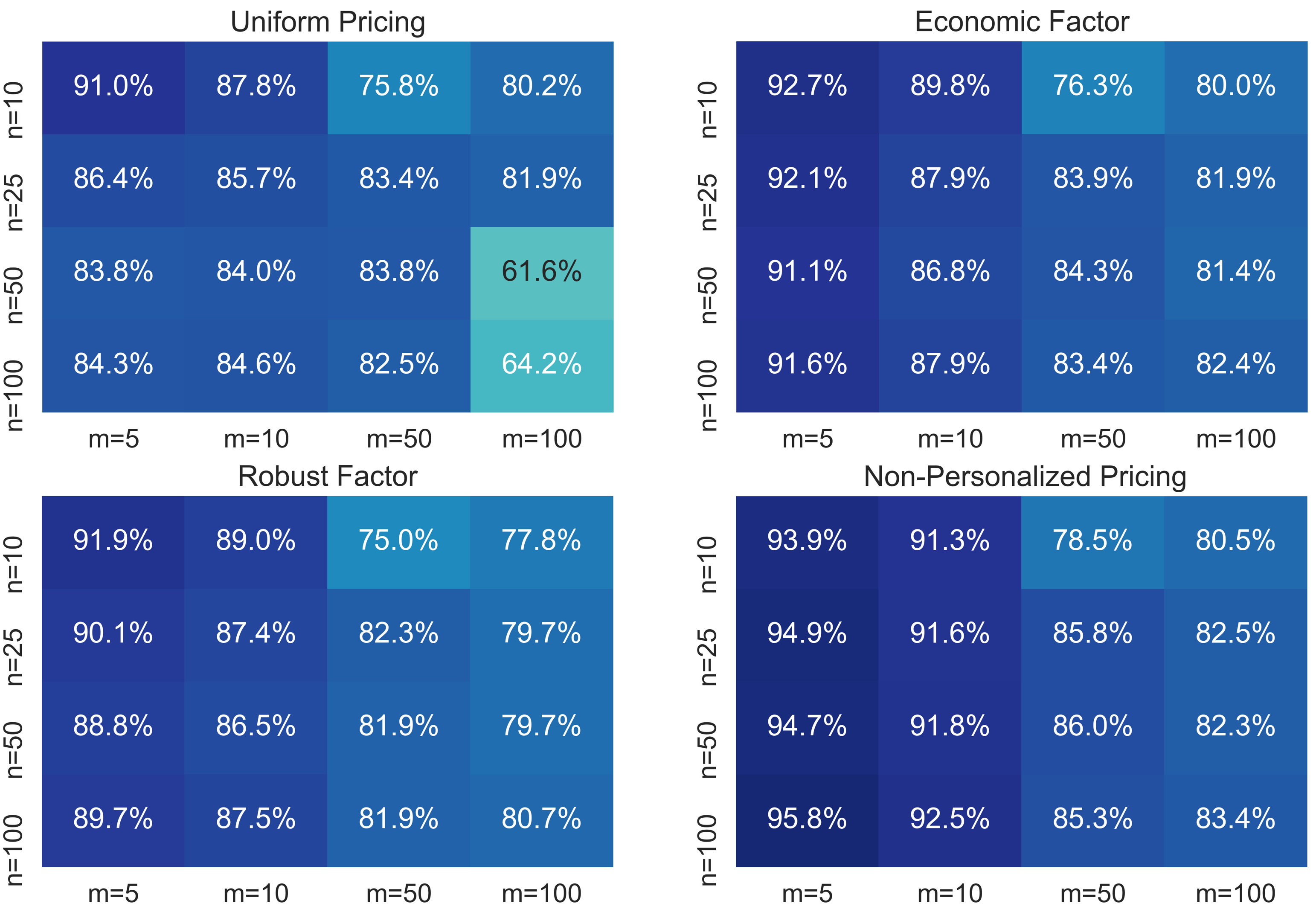}
\caption{Average performance of different pricing strategies when the demand model is the LC-MNL.}
\label{fig_lc-mnl_demand}
\end{figure}

Similarly to the clustering results for the linear demand model, Figure \ref{fig_lc-mnl-clustering} reports computational results that quantify and compare the benefits of clustering consumer segments using k-means and FPF under the Economic single factor and the Robust single factor pricing strategies. The values represent the average percentage of the profit obtained with respect to the best personalized pricing strategy. As in the linear demand model, $m=6$ for all these experiments. As can be seen, the best combination was the economic factor coupled with $k$-means and the worse was the robust factor with FPF.

\begin{figure}[H]
\centering
\includegraphics[scale=0.30]{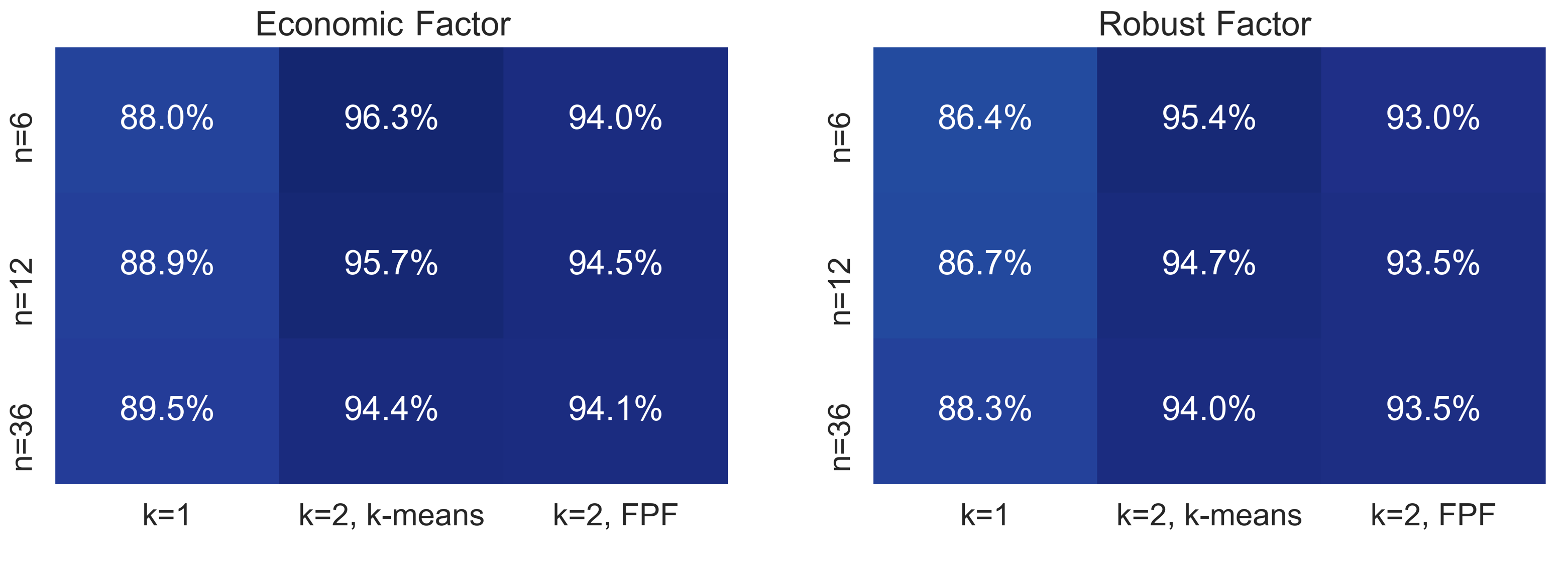}
\caption{Average performance of Economic and Robust pricing strategies for instances with 6 segments with and without clustering with $k=2$.}
\label{fig_lc-mnl-clustering}
\end{figure}

\subsection{Linear Pricing Versus Non-Linear Personalized Pricing}
We now consider the non-linear pricing scheme where the firm sells a single product in different bundle sizes- for a broad overview of non-linear pricing see \cite{wilson1993nonlinear} and \cite{oren2012nonlinear}. Let $p_i$ be the price of a size $i \in N$ bundle and $d_i(p)$ is the demand for a size $i$ bundle at the price vector $p$. Let $\cR^* = \max_pR(p)$ yielding an optimal non-linear price schedule. Let $f$ be a vector with components $f_i = i, i \in N$. Then $\cR^f := \max_q qf'd(qf)$ corresponds to the linear price schedule $p_i = iq, i \in N$. Theorem 2 holds if $d_i(p)$ is increasing in $p_k$, $k \neq i$ and if $f'd(p)$ is decreasing in $p$.  We remark that $f'd(p) = \sum_{i \in N} id_i(p)$ is the total number of units demanded at price $p$, so the condition is that the total number of units demanded goes down if the price of any bundle is increased.

As an example, suppose that $d(p) = a -Bp$ and $B$ is an $M$-matrix, then $Bf \geq 0$ for all $f$ and in particular for $f_i = i$. Let $v := B^{-1}a$. Then $p^* = v/2$. If $v_i$ is increasing concave then $q_{\min} = v_n/n$ and $q_{\max} = v_1$ resulting in  $\beta = 1 + \ln(nv_1/v_n)$.

To our knowledge this is the first result that gives a performance guarantee for linear versus non-linear pricing, but we can go further as Theorem~\ref{thm:sf} works for the personalized version as well. More precisely, if $d^j(p)$ is the demand vector for bundles of size $i \in N$ for every $j \in M$, $d_{ij}(p)$ is increasing in $p_k$ for all $k \neq i$, and $f'd^j(p)$ is decreasing in $p$ for all $j \in M$ then Theorem~\ref{thm:sf} applies and bounds how much better personalized non-linear pricing can be relative to linear pricing. Theorem~\ref{thm:sf} can also be used to bound the performance of non-personalized non-linear pricing schemes relative to personalized non-linear pricing schemes. We summarize the results for personalized non-linear pricing here.

\begin{cor}
If $d_{ij}(p)$ represent the demand for bundles of size $i \in N$ in market segment $j \in M$ where $p_i$ is the price of a size $i \in N$ bundle, $d_{ij}(p)$ is decreasing in $p_k, k \neq i$ and $\sum_{i \in N}id_{ij}(p)$ is decreasing in $p$ for all $j \in M$, then
$$\bar{\cR} \leq \beta \cR^f \leq \beta \cR^*$$
where $\cR^*$ is the profit from the optimal non-personalized non-linear pricing policy and $\cR^f$ is the optimal under non-personalized linear pricing policy.
\end{cor}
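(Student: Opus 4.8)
The plan is to recognize this corollary as a direct specialization of Theorem~\ref{thm:sf} (together with Proposition~\ref{prop:f-reg}) to the non-linear pricing setting with the specific factor $f = (1, 2, \ldots, n)$. First I would verify the hypotheses of Proposition~\ref{prop:f-reg} for this choice of $f$. Property P1 asks that $d_{ij}(p)$ be increasing in $p_k$ for $k \neq i$; the corollary's hypothesis states $d_{ij}(p)$ is decreasing in $p_k$ for $k \neq i$ — I would treat this as a sign/typo consistency issue (the substitutes condition), matching it to P1 as stated in the proposition, since elsewhere in the paper P1 is described as ``$d_{ij}(p)$ is increasing in $p_k$ for all $k \neq i$'' and is identified with the weak-substitutes property. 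Property P2 asks that $\sum_{i \in N} f_i d_{ij}(p) = \sum_{i \in N} i\, d_{ij}(p)$ be decreasing in $p_i$ for each $i$ and each $j$; this is exactly the corollary's second hypothesis that the total number of units demanded, $\sum_{i \in N} i\, d_{ij}(p)$, is decreasing in $p$ for all $j \in M$. So both P1 and P2 hold for $f = (1,\ldots,n)$.

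Next, by Proposition~\ref{prop:f-reg}, P1 and P2 imply Assumption~A1 for this $f$. Assumption~A0 holds by the standing assumption that optimal prices are positive and finite (few firms price bundles at zero, and there is no demand at infinite prices). With A0 and A1 in force, Theorem~\ref{thm:sf} gives $\bar{\cR} \leq \beta\, \cR^f$ where $\beta = 1 + \ln(q_{\max}/q_{\min})$ and $q_{\min}, q_{\max}$ are the extreme ratios $\bar{p}_{ij}/f_i = \bar{p}_{ij}/i$ over $i \in N$, $j \in M$. Here $\cR^f = \max_{q > 0} R(qf) = \max_{q>0} \sum_{j \in M}\theta_j \sum_{i \in N} (iq)\, d_{ij}(qf)$, which is precisely the profit of the best non-personalized \emph{linear} price schedule $p_i = iq$. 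I would then append the trivial chain $\cR^f \leq \cR^* \leq \bar{\cR}$: the first inequality because pricing along $f$ is a restriction of non-personalized pricing ($\cR^f = \max_{q>0}R(qf) \leq \max_{p \geq 0}R(p) = \cR^*$), and the second because a single common price vector is a feasible personalized policy ($\cR^* = \max_p \sum_j \theta_j R_j(p) \leq \sum_j \theta_j \max_p R_j(p) = \bar{\cR}$). Stringing these together yields $\bar{\cR} \leq \beta\,\cR^f \leq \beta\,\cR^*$, which is the claim, with $\cR^*$ the optimal non-personalized non-linear profit and $\cR^f$ the optimal non-personalized linear profit.

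The only genuinely substantive point — and the one I would flag as the ``main obstacle,'' though it is really a matter of careful bookkeeping rather than a deep difficulty — is confirming that the interpretation of $f'd(p) = \sum_{i\in N} i\, d_i(p)$ as ``total units demanded'' correctly lines up with P2, and that the economic content (raising the price of any bundle does not increase total throughput) is the natural assumption one wants here. Everything else is a citation of Theorem~\ref{thm:sf} and Proposition~\ref{prop:f-reg}. I would also remark that no tightness claim is being made in this corollary (unlike Theorem~\ref{thm:sf}), so the proof need not revisit the tightness construction; the corollary is purely the ``upper bound'' direction applied to a named special case. Thus the proof is short: check P1 and P2 for $f = (1,\ldots,n)$, invoke Proposition~\ref{prop:f-reg} to get A1, invoke Theorem~\ref{thm:sf} with A0, and close with $\cR^f \leq \cR^* \leq \bar{\cR}$.
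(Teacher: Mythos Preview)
Your proposal is correct and follows exactly the route the paper intends: the corollary is stated without a separate proof, but the preceding paragraph makes clear it is just Theorem~\ref{thm:sf} applied with $f_i = i$ after checking P1 and P2 (Proposition~\ref{prop:f-reg}), together with the trivial chain $\cR^f \leq \cR^*$. You are also right to flag the ``decreasing in $p_k$, $k \neq i$'' as a typo for ``increasing,'' since the paragraph immediately before the corollary explicitly states the substitutes condition as ``$d_{ij}(p)$ is increasing in $p_k$ for all $k \neq i$.''
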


We remark that $f$ here was selected as $f_i = i, i \in N$ for the purpose of comparing  a common linear price schedule for all customer types to optimal personalized non-linear pricing. One can instead use the robust $f$ or the economic $f$ to obtain a potentially better common (non-linear) price schedule.

Figure \ref{fig_non_linear_demand} reports a computational results about the performance of different pricing strategies for a non-linear pricing problem as described above. Each consumer segment follows a linear model as explained in Section \ref{sec:linear_demand}. The matrix $B_j$ associated to segment $j$ was generated in the same way as for the experiments of Section \ref{sec:linear_demand} whereas instead of generating a random vector $a^j$, we produced a random vector of utilities $u$ satisfying that $u_{i+1}>u_i$ and $u_{i+1}/(i+1) < u_{i}/i$ for all $i \in [n-1]$. We generated 20 instances for each value of $m$ and each maximum bundle size ($n$). As can be seen from the tables, linear pricing performance relatively well for $n  = 10$ but deteriorates as the maximum bundle size increases achieving only about 77\% of optimal personalized non-linear pricing. The results do not seem to be sensitive to the number of customer types. The robust factor slightly outperforms the economic factor, and the non-personalized non-linear pricing strategy is very close to the optimal personalized strategy with little need for clustering types.

\begin{figure}[H]
\centering
\includegraphics[scale=0.30]{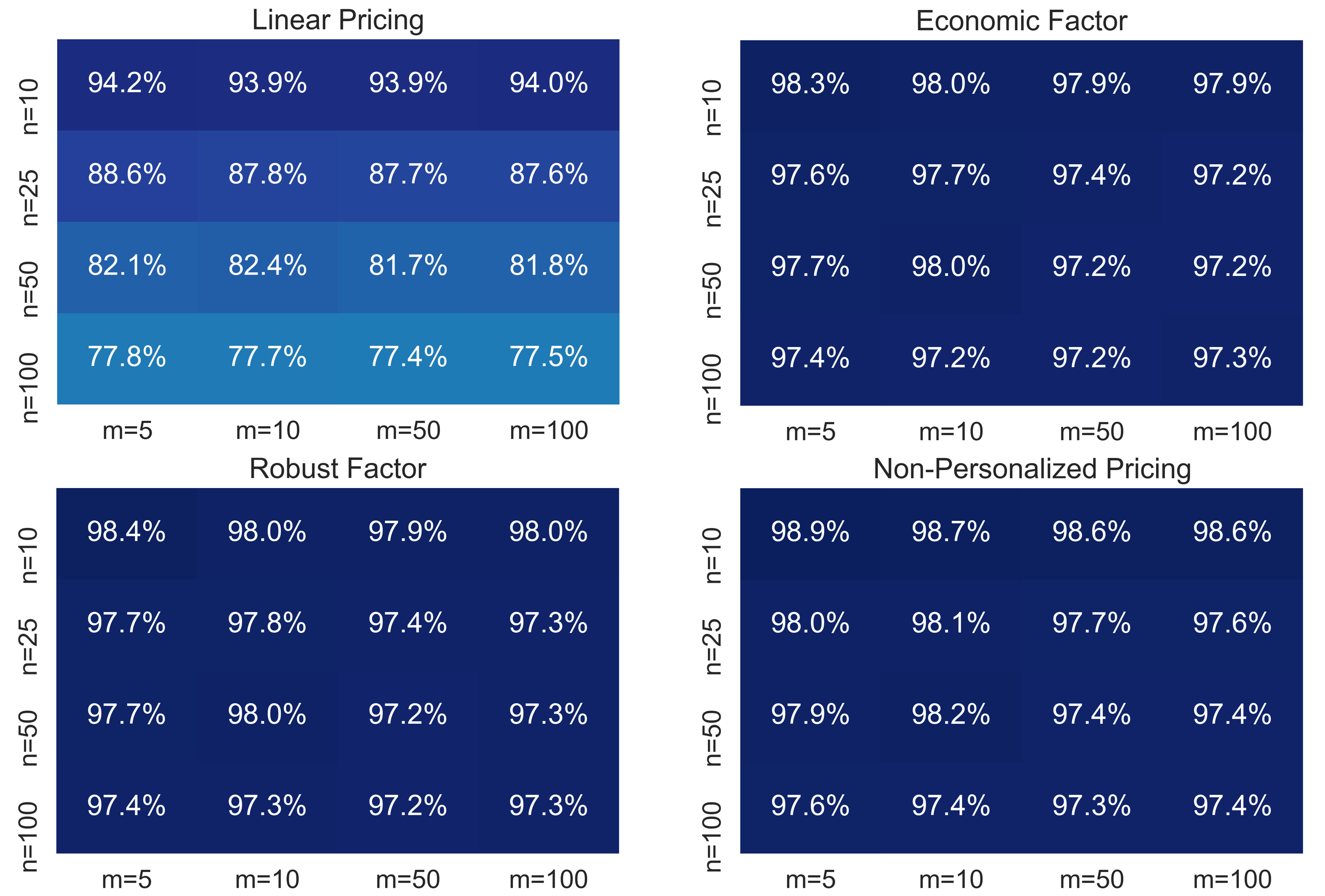}
\caption{Performance of different pricing methods under non-linear pricing. Here $n$ represents the maximum bundle size.}
\label{fig_non_linear_demand}
\end{figure}

\subsection{Bundle-size pricing as an approximation to mixed bundling}

Consider a set of $n$ products that can be sold as bundles. Suppose that given prices for each of the $2^{n-1}$ non-trivial bundles, the firm can obtain the demand for each of the bundles. Selecting the bundle prices to maximize profits is know as the mixed bundle problem. Suppose  that $p(x), x \in \{0,1\}^n$ is an optimal solution to the mixed bundle problem. We may wonder about the performance of several pricing strategies relative to mixed bundling. The simplest strategy is to use uniform pricing. This gives rise to $f(x) = e'x$. A second strategy, known as component pricing is to set $f(x) = p'x$ where $p_i$ is the price of component $i$. Finally, we can have a non-linear function $f(x) = g(e'x)$ where $g$ is an increasing non-linear function. Notice this form yields the same  price factor for all bundles of size $e'x$.  This is known as bundle-size pricing and contains uniform pricing as a special case if $g(e'x) = e'x$. In all cases, the firm will find and optimal $q$ for the pricing strategy $q \cdot f(x)$ for size $x$ bundles, \citet{ChenLeslieSoren2011bunde} shows through extensive numerical studies that bundle-size pricing can do a good job of approximating the benefits of the more complicated mixed bundling strategy but to our knowledge there are no theoretical work on tight bounds. Let $q_{\max} = \max_{x \neq 0} p(x)/f(x)$ and $q_{\min} = \min_{x \neq 0} p(x)/f(x)$. Under mild conditions on the demand function for bundles (A0 and A1, or A0 together with P1 and P2) we obtain performance guarantees $\cR^* \leq \beta \cR^f$ where here $\cR^*$ is the optimal profit under mixed bundling and $\cR^f$ is the optimal pricing along the vector $f(x), x \in \{0,1\}^n$, where $\beta = 1 + \ln(\rho)$ where $\rho = q_{\max}/q_{\min}$. As an example, if $n = 2$ then the non-trivial bundles are $e_1, e_2$ and $e = e_1 +e_2$ where $e_i$  is the $i$th unit vector in $\Re^2$. If $p(e_1) = 1, p(e_2) = 2$ and $p(e) = 2.5$ and we use $f(1) = 1$ and $f(2) = 2$ for bundles of size 1 and 2, then $q_{\min} = 1$ and $q_{\max} = 2$ so $\rho = 2$ and $\beta = 1 + \ln(2) \simeq  1.693$.  The theory also holds if there are multiple types and the firm uses personalized mixed pricing for each type. The only difference is that the definition of $q_{\min}$ and $q_{\max}$ has to be over all bundles and all types. It is also possible to fit a robust or an economic mixed bundle strategy and as long as the conditions A0 and A1 hold the bound from Theorem~1 holds.  If the model is linear or latent class MNL we will get similar numerical results with the exception that the bundles are interpreted as products and the vector $f$ is either the bundle size or either the economic or robust factor in the case of multiple market segments.

\section{Conclusions and Future Research} This paper presents tight performance guarantees for multi-product single factor pricing relative to personalized pricing with applications to a variety of demand models. The results apply to $d_i(p_i, p_{-i})$, where $d_i$ is the demand vector  for firm $i$  at price vector $p_i$ given that competitors offer price $p_{-i}$ for their own goods provided A1 or P1 and P2 hold for fixed $p_{-i}$. This opens the door to study competition  under a variety of pricing scenarios.

\section{Appendix}

\begin{proof}  For convenience, we first consider the single customer type case dropping the index $j$. Let $p^*\in \arg\max p'd(p)$ and $\delta_i(q) = 1$ for $q \leq p^*_i/f_i$ and $0$ otherwise.
 Since $qf \leq \max(p^*,qf)$, P2 implies that
 $$f'd(\max(p^*,qf)) \leq f'd(qf).$$

The move from $\max(p^*,qf)$ to $qf$ decreases the price of products for which $\delta_i(q) = 1$.  By P1 this has a negative effect on the demand of products for which $\delta_i(q) = 0$. Thus,
$$\sum_{i \in N}f_id_i(qf)(1-\delta_i(q))  \leq \sum_{i \in N} f_id_i(\max(p^*,qf))(1-\delta_i(q)) .$$
Consequently,
$$\sum_{i \in N}f_id_i(qf)\delta_i(q) \geq \sum_{i \in N}f_id_i(\max(p^*,qf))\delta_i(q).$$
Moreover, moving from $\max(p^*,qf)$ to $p^*$ decreases the prices of products for which $\delta_i(q) = 0$, so by P1
$$d_i(\max(p^*,qf))\delta_i(q) \geq d_i(p^*)\delta_i(q)~~~~\forall~~~~i \in N.$$
Multiplying by $f_i$, adding and collecting the inequalities we obtain
$$f'd(qf)  \geq \sum_{i \in N}f_id_i(\max(p^*,qf))\delta_i(q) \geq \sum_{i \in N}f_id_i(p^*)\delta_i(q).$$
This completes the proof for a single customers type, and implies  under the stated assumptions that
$$\sum_{i \in N}f_id_{ij}(qf) \geq \sum_{i \in N}f_id_{ij}(\bar{p}^j)\delta_{ij}(q)~~~\forall~~~j \in M.$$
Multiplying by the weights $\theta_j$ and adding over $M$ yields $H(q) \geq G(q)$ which is A$1$.
\end{proof}

\section*{Acknowledgements}
We would like to thank Adam Elmachtoub, Pin Gao, Wentau Lu, Preston McAfee, Shmuel Oren, and Zhuodong Tang for their help and suggestions, as well as Daniel Aloise for pointing us to the minimax diameter clustering problem and the FPF heuristic used in this paper.

\bibliographystyle{plainnat}
\bibliography{references}

\end{document}